\def\llncs{0}
\def\fullpage{1}
\def\anonymous{0}
\def\authnote{0}
\def\notxfont{0}
\def\submission{0}
\def\reply{0}
\def\cameraready{0}
\def\noaux{0}
\def\displaylabel{0}
\def\anonymous{1}
\def\llncs{1}
\def\displaylabel{0}
\def\llncs{1}
\def\anonymous{0}
\def\authnote{0}
\def\displaylabel{0}
\def\authnote{0}
\renewcommand{\emph}{\textit}
\newcommand{\strike}[1]{}
\newcommand{\strike}[1]{\sout{#1}}
\definecolor{darkblue}{rgb}{0,0,0.6}
\definecolor{darkgreen}{rgb}{0,0.5,0}
\definecolor{maroon}{rgb}{0.5,0.1,0.1}
\definecolor{dpurple}{rgb}{0.2,0,0.65}
\definecolor{chocolate}{rgb}{0.8,0.4,0.1}
\DeclareMathAlphabet{\mathpzc}{OT1}{pzc}{m}{it}
\renewcommand*{\backref}[1]{}
\def\notxfont{1}
\renewcommand{\subparagraph}{\paragraph}
\newtheoremstyle{thicktheorem}%
{\topsep}
{\topsep}
{\itshape}{}%
{\bfseries}%
{.}
{ }%
{\thmname{#1}\thmnumber{ #2}%
		\thmnote{ (#3)}%
}
\newtheoremstyle{remark}
{\topsep}
{\topsep}
	{}
	{}
	{}
	{.}
	{ }
	{\textit{\thmname{#1}}\thmnumber{ #2}
			\thmnote{ (#3)}%
	}
	\theoremstyle{thicktheorem}
	\newtheorem{theorem}{Theorem}[section]
	\newtheorem{lemma}[theorem]{Lemma}
	\newtheorem{definition}[theorem]{Definition}
	\theoremstyle{remark}
	\newtheorem{remark}[theorem]{Remark}
	\crefname{theorem}{Theorem}{Theorems}
	\crefname{assumption}{Assumption}{Assumptions}
	\crefname{construction}{Construction}{Constructions}
	\crefname{corollary}{Corollary}{Corollaries}
	\crefname{conjecture}{Conjecture}{Conjectures}
	\crefname{definition}{Definition}{Definitions}
	\crefname{exmaple}{Example}{Examples}
	\crefname{experiment}{Experiment}{Experiments}
	\crefname{counterexample}{Counterexample}{Counterexamples}
	\crefname{lemma}{Lemma}{Lemmata}
	\crefname{observation}{Observation}{Observations}
	\crefname{proposition}{Proposition}{Propositions}
	\crefname{remark}{Remark}{Remarks}
	\crefname{claim}{Claim}{Claims}
	\crefname{fact}{Fact}{Facts}
	\crefname{note}{Note}{Notes}
 \crefname{appendix}{App.}{Appendices}
 \crefname{section}{Sec.}{Sections}
\renewcommand*{\backref}[1]{}
	\renewcommand*{\backref}[1]{(Cited on page~#1.)}
\newcommand*{\keys}[1]{\mathsf{#1}}
\newcommand*{\algo}[1]{\ensuremath{\mathsf{#1}}}
\newcommand*{\qalgo}[1]{\ensuremath{\mathpzc{#1}}}
\newcommand*{\qstate}[1]{\mathpzc{#1}}
\newcommand*{\qreg}[1]{{\color{gray}{\mathsf{#1}}}}
\newcounter{expitem}
\newcommand{\chosen}{\leftarrow}
\newcommand{\lrun}{\leftarrow}
\newcommand{\la}{\leftarrow}
\newcommand{\ra}{\rightarrow}
\renewcommand{\gets}{\leftarrow}
\newcommand{\seteq}{\coloneqq}
\newcommand{\tensor}{\otimes}
\newcommand{\setbk}[1]{\{#1\}}
\newcommand{\cM}{\mathcal{M}}
\newcommand{\cZ}{\mathcal{Z}}
\newcommand{\qA}{\qalgo{A}}
\newcommand{\qB}{\qalgo{B}}
\def\makeuppercase#1{
\expandafter\newcommand\csname sf#1\endcsname{\mathsf{#1}}
\expandafter\newcommand\csname frak#1\endcsname{\mathfrak{#1}}
\expandafter\newcommand\csname bb#1\endcsname{\mathbb{#1}}
\expandafter\newcommand\csname bf#1\endcsname{\textbf{#1}}
}
\def\makelowercase#1{
\expandafter\newcommand\csname frak#1\endcsname{\mathfrak{#1}}
\expandafter\newcommand\csname bf#1\endcsname{\textbf{#1}}
}
\newcounter{char}
   \edef\letter{\alph{char}}
   \edef\Letter{\Alph{char}}
\def\makeuppercase#1{
\expandafter\newcommand\csname tl#1\endcsname{\widetilde{#1}}
}
\def\makelowercase#1{
\expandafter\newcommand\csname tl#1\endcsname{\widetilde{#1}}
}
\newcommand{\N}{\mathbb{N}}
\newcommand{\R}{\mathbb{R}}
\newcommand{\bit}{\{0,1\}}
\newcommand{\Ms}{\mathcal{M}}
\newcommand{\secp}{\lambda}
\newcommand{\cert}{\keys{cert}}
\newcommand{\advb}[3]{\mathsf{Adv}_{#1}^{\mathsf{#2} \mbox{-} \mathsf{#3}}}
\newcommand{\expb}[3]{\mathsf{Exp}_{#1}^{ \mathsf{#2} \mbox{-} \mathsf{#3}}}
\newcommand*{\pk}{\keys{pk}}
\newcommand*{\sk}{\keys{sk}}
\newcommand*{\vk}{\keys{vk}}
\newcommand*{\ct}{\keys{ct}}
\newcommand*{\msg}{m}
\newcommand{\qct}{\qstate{ct}}
\newenvironment{boxfig}[2]{\begin{figure}[#1]\fbox{\begin{minipage}{0.97\linewidth}
                        \vspace{0.2em}
                        \makebox[0.025\linewidth]{}
                        \begin{minipage}{0.95\linewidth}
            {{
                        #2 }}
                        \end{minipage}
                        \vspace{0.2em}
                        \end{minipage}}
                        }
                        {\end{figure}}
\newcommand{\Gen}{\algo{Gen}}
\newcommand{\keygen}{\algo{KeyGen}}
\newcommand{\Enc}{\algo{Enc}}
\newcommand{\Dec}{\algo{Dec}}
\newcommand{\enc}{\algo{Enc}}
\newcommand{\Sign}{\algo{Sign}}
\newcommand{\Vrfy}{\algo{Vrfy}}
\newcommand{\qEnc}{\qalgo{Enc}}
\newcommand{\qDec}{\qalgo{Dec}}
\newcommand{\qDelete}{\qalgo{Del}}
\newcommand\PKE{\algo{PKE}}
\newcommand\SIG{\algo{SIG}}
\newcommand{\negl}{{\mathsf{negl}}}
\newcommand{\zo}[1]{\{0,1\}^{#1}}
\newcommand{\xor}{\oplus}
\newcommand{\class}[1]{\mathsf{#1}}
\newcommand{\NP}{\class{NP}}
\newcommand{\sigklen}{\ell_{\sigk}}
\newcommand{\ttForge}{\mathtt{Forge}}
\newcommand{\pke}{\mathsf{pke}}
\newcommand{\PKEPVD}{\mathsf{PKE}\textrm{-}\mathsf{PVD}}
\newcommand{\sigk}{\mathsf{sigk}}
\newcommand{\msf}[1]{\mathsf{#1}}
\newcommand{\vast}{\bBigg@{3.5}}
\newcommand{\Vast}{\bBigg@{5}}
\newcommand*{\TD}{\msf{TD}}
\let\oldvec\vec
\let\vec\oldvec
\renewcommand*\l@author[2]{}
\renewcommand*\l@title[2]{}
\theoremstyle{remark}
\definecolor{darkblue}{rgb}{0,0,0.6}
\definecolor{darkgreen}{rgb}{0,0.5,0}
\definecolor{maroon}{rgb}{0.5,0.1,0.1}
\definecolor{dpurple}{rgb}{0.2,0,0.65}
\definecolor{darkviolet}{RGB}{130,95,141}
\definecolor{darkkhaki}{RGB}{189,183,107}
\newcommand{\ryo}[1]{\textcolor{brown}{[{\footnotesize {\bf RN:} { {#1}}}]}}
\newcommand{\fuyuki}[1]{\textcolor{blue}{[{\footnotesize {\bf FK:} { {#1}}}]}}
\newcommand{\takashi}[1]{\textcolor{dpurple}{[{\footnotesize {\bf TY:} { {#1}}}]}}
\newcommand{\ryo}[1]{}
\newcommand{\fuyuki}[1]{}
\newcommand{\takashi}[1]{}
\title{
\textbf{Publicly Verifiable Deletion from Minimal Assumptions}
\ifnum\noaux=1
\else
\ifnum\submission=1
\thanks{{\color{red}{\emph{We attached the full version of this paper as a supplementary material}}}.}
\else
\fi
\fi
}
\begin{document}

\ifnum\anonymous=1
\author{\empty}
\ifnum\llncs=1
\institute{\empty}
\else
\fi
\else
%
%
\ifnum\llncs=1
\author{
Fuyuki Kitagawa\inst{1}\and Ryo Nishimaki\inst{1} \and Takashi Yamakawa\inst{1}
}
\institute{
NTT Social Informatics Laboratories, 
}
\else
%
%

\author[$\dagger$]{Fuyuki Kitagawa}
\author[$\dagger$]{Ryo Nishimaki}
\author[$\dagger$]{Takashi Yamakawa}
\affil[$\dagger$]{{\small NTT Social Informatics Laboratories, Tokyo, Japan}\authorcr{\small \{fuyuki.kitagawa,ryo.nishimaki,takashi.yamakawa\}@ntt.com}}
\renewcommand\Authands{, }
\fi 
\fi

\ifnum\llncs=1
\date{}
\else
\date{\today}
\fi

\maketitle

\begin{abstract}
We present a general compiler to add the publicly verifiable deletion property for various cryptographic primitives including public key encryption, attribute-based encryption, and quantum fully homomorphic encryption. Our compiler only uses one-way functions, or more generally hard quantum planted problems for $\NP$, which are implied by one-way functions. 
It relies on minimal assumptions and enables us to add the publicly verifiable deletion property with no additional assumption for the above primitives. Previously, such a compiler needs additional assumptions such as injective trapdoor one-way functions or pseudorandom group actions [Bartusek-Khurana-Poremba, ePrint:2023/370]. Technically,  we upgrade an existing compiler for privately verifiable deletion [Bartusek-Khurana, ePrint:2022/1178] to achieve publicly verifiable deletion by using digital signatures. 
\end{abstract}

\ifnum\llncs=1
\else
\newpage
\setcounter{tocdepth}{2}
\tableofcontents

\newpage
\fi


\newcommand{\txPVD}{\textrm{PVD}}
\newcommand{\sfAoNE}{\mathsf{AoE}}

\section{Introduction}\label{sec:intro}

\subsection{Background}\label{sec:background}
Quantum mechanics yields new cryptographic primitives that classical cryptography cannot achieve.
In particular, the uncertainty principle enables us to certify deletion of information. 
Broadbent and Islam~\cite{TCC:BroIsl20} introduced the notion of quantum encryption with certified deletion, where we can generate a classical certificate for the deletion of quantum ciphertext. We need a verification key generated along with a quantum ciphertext to check the validity of a certificate. The root of this concept is revocable quantum time-release encryption by Unruh~\cite{JACM:Unruh15}, where a sender can revoke quantum ciphertext if a receiver returns it before a pre-determined time.  Encryption with certified deletion is useful because encryption security holds \emph{even if adversaries obtain a secret decryption key after they generate a valid certificate for deletion}.
After the work by Broadbent and Islam~\cite{TCC:BroIsl20}, many works presented extended definitions and new constructions of quantum (advanced) encryption with certified deletion~\cite{AC:HMNY21,ITCS:Poremba23,EPRINT:BarKhu22,EPRINT:HKMNPY23,EPRINT:BGGKMRR23,myEPRINT:BarKhuPor23}. In particular, Bartusek and Khurana~\cite{EPRINT:BarKhu22}, and Hiroka, Kitagawa, Morimae, Nishimaki, Pal, and Yamakawa~\cite{EPRINT:HKMNPY23} considered certified everlasting security, which guarantees that computationally \emph{unbounded} adversaries cannot obtain information about plaintext \emph{after a valid certificate was generated}. Several works~\cite{AC:HMNY21,ITCS:Poremba23,EPRINT:BGGKMRR23,myEPRINT:BarKhuPor23} considered public verifiability, where we can reveal verification keys without harming certified deletion security. These properties are desirable for encryption with certified deletion in real-world applications.

In this work, we focus on cryptographic primitives with publicly verifiable deletion ($\txPVD$)~\cite{myEPRINT:BarKhuPor23}, which satisfy certified everlasting security and public verifiability. Known schemes based on BB84 states are privately verifiable~\cite{TCC:BroIsl20,AC:HMNY21,EPRINT:BarKhu22,EPRINT:HKMNPY23}, where we need to keep verification keys secret to ensure encryption security.
Some schemes are publicly verifiable~\cite{AC:HMNY21,ITCS:Poremba23,EPRINT:BGGKMRR23,myEPRINT:BarKhuPor23}. Public verifiability is preferable to private verifiability since we need to keep many verification keys secret when we generate many ciphertexts of encryption with privately verifiable deletion. More secret keys lead to more risks. In addition, anyone can verify deletion in cryptography with $\txPVD$. 

Hiroka, Morimae, Nishimaki, and Yamakawa~\cite{AC:HMNY21} achieved interactive public key encryption with non-everlasting $\txPVD$  from extractable witness encryption~\cite{C:GKPVZ13}, which is a strong knowledge-type assumption, and one-shot signatures which require classical oracles~\cite{STOC:AGKZ20}.
Poremba~\cite{ITCS:Poremba23} achieved public key encryption (PKE) and fully homomorphic encryption (FHE) with non-everlasting $\txPVD$ based on lattices. He conjectured Ajtai hash function satisfies a strong Gaussian collapsing property and proved the security of his constructions under the conjecture. Later, Bartusek, Khurana, and Poremba~\cite{myEPRINT:BarKhuPor23} proved the conjecture is true under the LWE assumption.

Bartusek, Garg, Goyal, Khurana, Malavolta, Raizes, and Roberts~\cite{EPRINT:BGGKMRR23} achieved primitive $X$ with $\txPVD$ from $X$ and post-quantum secure indistinguishability obfuscation (IO)~\cite{JACM:BGIRSVY12} where $X \in \setbk{\textrm{SKE, COM, PKE, ABE, FHE, TRE, WE}}$.\footnote{SKE, COM, ABE, TRE, and WE stand for secret key encryption, commitment, attribute-based encryption, time-release encryption, and witness encryption, respectively. Although Bartusek et al.~\cite{EPRINT:BGGKMRR23} did not mention, we can apply their transformation to SKE and COM as the results by Bartusek and Khurana~\cite{EPRINT:BarKhu22}.} They also achieved functional encryption with $\txPVD$ and obfuscation with $\txPVD$, which rely on post-quantum IO. All their constructions use subspace coset states~\cite{C:CLLZ21}. 
Bartusek et al.~\cite{myEPRINT:BarKhuPor23} achieved PKE (resp. COM) with $\txPVD$ from injective trapdoor one-way functions (or superposition-invertible regular trapdoor functions) or pseudorandom group actions~\cite{EPRINT:HhaMorYam22} (resp. almost-regular one-way functions). They also achieved primitive $Y$ with $\txPVD$ from injective trapdoor one-way functions (or superposition-invertible regular trapdoor functions) or pseudorandom group actions, and primitive $Y$ where $Y\in \setbk{\textrm{ABE, QFHE, TRE, WE}}$. They obtained these results by considering certified everlasting target-collapsing functions as an intermediate primitive.

As we explained above, known encryption with $\txPVD$ constructions need strong and non-standard assumptions~\cite{AC:HMNY21,EPRINT:BGGKMRR23}, algebraic assumptions~\cite{ITCS:Poremba23,myEPRINT:BarKhuPor23}, or additional assumptions~\cite{myEPRINT:BarKhuPor23}. This status is unsatisfactory because Bartusek and Khurana~\cite{EPRINT:BarKhu22} prove that we can achieve $X$ with \emph{privately} verifiable deletion\footnote{We do not abbreviate when we refer to this type to avoid confusion.} from $X$ where $X \in \setbk{\textrm{SKE, COM, PKE, ABE, FHE, TRE, WE}}$\footnote{Although Bartusek and Khurana~\cite{EPRINT:BarKhu22} did not mention, we can apply their transformation to SKE.} \emph{without any additional assumptions}. Thus, our main question in this work is the following.

\begin{center}
\emph{Can we achieve encryption with $\txPVD$ from minimal assumptions?}
\end{center}

\subsection{Our Results}\label{sec:results} 
We affirmatively answer the main question described in the previous section.
We present a general transformation from primitive $Z$ into $Z$ with $\txPVD$ where $Z \in \setbk{\textrm{SKE, COM, PKE, ABE, QFHE, TRE, WE}}$. 
In the transformation, we only use one-way functions, or more generally, hard quantum planted problems for $\NP$, which we introduce in this work and are implied by one-way functions. 

More specifically, we extend the certified everlasting lemma by Bartusek and Khurana~\cite{EPRINT:BarKhu22}, which enables us to achieve encryption with privately verifiable deletion, to a publicly verifiable certified everlasting lemma.
We develop an authentication technique based on (a variant of) the Lamport signature~\cite{Lamport79} to reduce our publicly verifiable certified everlasting lemma to Bartusek and Khurana's certified everlasting lemma. 

Our new lemma is almost as versatile as Bartusek and Khurana's lemma. It is easy to apply it to all-or-nothing type encryption\footnote{SKE, PKE, ABE, (Q)FHE, TRE, and WE fall into this category.}, where a secret key (or witness) holder can recover the entire plaintext. One subtle issue is that we need to use QFHE for (Q)FHE with $\txPVD$. The reason is that we need to apply an evaluation algorithm to quantum ciphertext. Note that we can use FHE and Bartusek and Khurana's lemma to achieve FHE with privately verifiable deletion.

The advantages of our techniques are as follows:
\begin{itemize}
\item For $Z' \in \setbk{\textrm{SKE, COM, PKE, ABE, QFHE, TRE}}$, we can convert plain $Z'$ into $Z'$ with $\txPVD$ with no additional assumption.  
For WE, we can convert it into WE with $\txPVD$ additionally assuming one-way functions (or hard quantum planted problems for $\NP$).\footnote{WE does not seem to imply one-way functions.}  
Bartusek et al.~\cite{myEPRINT:BarKhuPor23} require injective (or almost-regular) trapdoor one-way functions, pseudorandom group actions, or almost-regular one-way functions for their constructions.
\item 
Our transformation is applicable even if the base scheme has quantum encryption and decryption (or committing) algorithms.\footnote{The compilers of \cite{EPRINT:BarKhu22,myEPRINT:BarKhuPor23} are also applicable to schemes that have quantum encryption and decryption (or committing) algorithms though they do not explicitly mention it. \takashi{Is this correct?}\fuyuki{I think so.}\ryo{Regarding~\cite{EPRINT:BarKhu22}, it is correct. Regarding~\cite{myEPRINT:BarKhuPor23}, this issue is subtle. More precisely, they directly construct PKE and COM with $\txPVD$, not from classical PKE and COM via a compiler.}} 
This means that our assumptions are minimal since  $Z$ with $\txPVD$ implies both plain $Z$ with quantum encryption and decryption (or committing) algorithms and hard quantum planted problems for $\NP$ 
for $Z \in \setbk{\textrm{SKE, COM, PKE, ABE, QFHE, TRE, WE}}$. 

\item Our approach is simple. Bartusek et al.~\cite{myEPRINT:BarKhuPor23} introduced an elegant intermediate notion, certified everlasting target-collapsing, to achieve cryptography with $\txPVD$. However, they use a few more intermediate notions (such as balanced binary-measurement target-collision-resistance) for their approach.
\end{itemize}


\subsection{Technical Overview}\label{sec:tech_overview}
As explained in \Cref{sec:background}, Bartusek and Khurana~\cite{EPRINT:BarKhu22} gave a generic compiler to add the \emph{privately verifiable} deletion property for various types of encryption. 
Our finding is that there is a surprisingly simple way to upgrade their compiler to achieve \emph{publicly verifiable} deletion by additionally using digital signatures. 

\paragraph{Notations.}
For a bit string $x$, we write $x_j$ to mean $j$-th bit of $x$. 
For bit strings $x,\theta\in \bit^{\ell}$, we write $\ket{x}_\theta$ to mean the BB84 state $\bigotimes_{j\in [\ell]}H^{\theta_j}\ket{x_j}$ where $H$ is the Hadamard operator. 
 
\paragraph{Certified everlasting lemma of \cite{EPRINT:BarKhu22}.}
The compiler of  \cite{EPRINT:BarKhu22} is based on the \emph{certified everlasting lemma} described below.\footnote{For simplicity, we state a simplified version of the lemma that is sufficient for the conversion for PKE, FHE, TRE, and WE, but not for ABE.
See \Cref{lem:ce} for the general version.} We describe it in the dual version, where the roles of computational and Hadamard bases are swapped for convenience. We stress that the dual version is equivalent to the original one because they coincide under an appropriate basis change.

Consider a family of distributions $\{{\mathcal{Z}(m)}\}_{m\in\bit^{\secp+1}}$ over classical strings, such that for any $m\in \bit^{\secp+1}$, the distribution $\mathcal{Z}(m)$ is computationally indistinguishable from the distribution $\mathcal{Z}(0^{\secp+1})$. In other words, each distribution $\mathcal{Z}(m)$ can be thought of as an ``encryption'' of the input $m$.
Let $\widetilde{\mathcal{Z}}(b)$ be an experiment between an adversary and challenger defined as follows for $b\in \bit$: 
\begin{itemize}
    \item The challenger samples $x, \theta \leftarrow \bit^{\secp}$ and sends $\ket{x}_{\theta}$ and $\mathcal{Z}(\theta,b \oplus \bigoplus_{j: \theta_j = 1} x_j)$ to the adversary. 
    \item The adversary sends a classical string $x'\in \bit^\secp$ and a quantum state $\rho$ to the challenger. 
    \item The challenger outputs $\rho$ if $x'_j=x_j$ for all $j$ such that $\theta_j=0$, and otherwise outputs a special symbol $\bot$ as the output of the experiment.
\end{itemize} 
The certified everlasting lemma states that 
for any QPT adversary, the trace distance between $\widetilde{\mathcal{Z}}(0)$ and $\widetilde{\mathcal{Z}}(1)$ is negligible in $\secp$.

The above lemma immediately gives a generic compiler to add the privately verifiable deletion property.  To encrypt a message $b\in \bit$,  we set the ciphertext to be $(\ket{x}_{\theta},\enc(\theta,b \oplus \bigoplus_{j: \theta_j = 1} x_j))$, where $x$ and $\theta$ are randomly chosen from $\bit^\secp$, and $\enc$ is the encryption algorithm of the underlying scheme.\footnote{We write $\enc(\theta,b \oplus \bigoplus_{j: \theta_j = 1} x_j)$ to mean an encryption of the message $(\theta,b \oplus \bigoplus_{j: \theta_j = 1} x_j)$ where we omit the encryption key.} 
The decryptor first decrypts the second component to get $(\theta,b \oplus \bigoplus_{j: \theta_j = 1} x_j)$, recovers $ \bigoplus_{j: \theta_j = 1} x_j$ from $\ket{x}_{\theta}$ and $\theta$, and then XORs it with $b \oplus \bigoplus_{j: \theta_j = 1} x_j$ to recover $b$.  
To delete the ciphertext and obtain a certificate $x'$, we measure $\ket{x}_{\theta}$ in the standard basis.
To verify the certificate, we check if $x'_j=x_j$ for all $j$ such that $\theta_j=0$. 
By utilizing the above lemma, we can see that an adversary's internal state will not contain any information about $b$ given the verification algorithm's acceptance. Therefore, the scheme offers certified everlasting security. However, this scheme has only privately verifiable deletion because the verification algorithm needs to know $x$ and $\theta$, which are part of the encryption randomness that has to be hidden from the adversary. 

\paragraph{Making verification public via digital signatures.} 
We show a publicly verifiable variant of the certified everlasting lemma by using digital signatures. Roughly speaking, our idea is to generate a signature for the BB84 state $\ket{x}_{\theta}$ by coherently running the signing algorithm so that the verification of deletion can be done by the verification of the signature, which can be done publicly. Note that the signature does \emph{not} certify $\ket{x}_{\theta}$ as a quantum state. It rather certifies its computational basis part (i.e., $x_j$ for $j$ such that $\theta_j=0$). This is sufficient for our purpose because the verification of deletion just checks the computational basis part.  

With the above idea in mind, we modify the experiment $\widetilde{\mathcal{Z}}(b)$ as follows:
\begin{itemize}
    \item 
    The challenger 
    generates a key pair $(\vk,\sigk)$ of a digital signature scheme and
    samples $x, \theta \leftarrow \bit^{\secp}$. 
    Let $U_{\mathsf{sign}}$ be a unitary that works as follows:
  \[
        \ket{m}\ket{0\ldots0}\mapsto\ket{m}\ket{\Sign(\sigk,m)}.
        \]
    where $\Sign(\sigk,\cdot)$ is a deterministic signing algorithm with a signing key $\sigk$. 
    The challenger sends $\vk$, $U_{\mathsf{sign}} \ket{x}_{\theta}\ket{0\ldots 0}$, and $\mathcal{Z}(\sigk,\theta,b \oplus \bigoplus_{j: \theta_j = 1} x_j)$ to the adversary. 
    \item The adversary sends a classical string $x'\in \bit^\secp$, a signature $\sigma$, and a quantum state $\rho$ to the challenger. 
    \item The challenger outputs $\rho$ if $\sigma$ is a valid signature for $x'$, and otherwise outputs a special symbol $\bot$ as the output of the experiment.
\end{itemize} 
We show that for any QPT adversary, the trace distance between $\widetilde{\mathcal{Z}}(0)$ and $\widetilde{\mathcal{Z}}(1)$ is negligible in $\secp$ if we instantiate the digital signatures  with an appropriate scheme as explained later.
The crucial difference from the original lemma is that the challenger does not need to check if $x'_j=x_j$ for all $j$ such that $\theta_j=0$ and only needs to run the verification algorithm of the digital signature scheme. 

By using the above variant similarly to the original privately verifiable construction, we obtain a generic compiler that adds publicly verifiable deletion property. For clarity, we describe the construction below. 
To encrypt a message $b\in \bit$, the encryption algorithm generates a key pair $(\vk,\sigk)$ of the digital signature scheme, chooses $x,\theta\gets \bit^\secp$, and outputs a ciphertext $(U_{\mathsf{sign}}\ket{x}_{\theta},\enc(\sigk, \theta,b \oplus \bigoplus_{j: \theta_j = 1} x_j))$ and a public verification key $\vk$.  
The decryptor first decrypts the second component to get $(\sigk,\theta,b \oplus \bigoplus_{j: \theta_j = 1} x_j)$, uncompute the signature register of the first component by using $\sigk$ to get $\ket{x}_{\theta}$, recovers $ \bigoplus_{j: \theta_j = 1} x_j$ from $\ket{x}_{\theta}$ and $\theta$, and then XORs it with $b \oplus \bigoplus_{j: \theta_j = 1} x_j$ to recover $b$.  
To delete the ciphertext and obtain a certificate $(x',\sigma)$, we measure $U_{\mathsf{sign}}\ket{x}_{\theta}$ in the standard basis to get $(x',\sigma)$.
To verify the certificate, we check if $\sigma$ is a valid signature for $x'$.  
By utilizing the above lemma, we can see that the above scheme achieves certified everlasting security with public verification.

\paragraph{Proof idea and instantiation of digital signatures.} 
We prove the above publicly verifiable version by reducing it to the original one in  \cite{EPRINT:BarKhu22}. 
Noting that $\mathcal{Z}(\sigk,\theta,b \oplus \bigoplus_{j: \theta_j = 1} x_j)$ computationally hides $\sigk$ by the assumption, a straightforward reduction works if the digital signature scheme satisfies the following security notion which we call  one-time unforgeability for BB84 states.
\begin{definition}[One-time unforgeability for BB84 states (informal)]
Given $\vk$ and $U_{\mathsf{sign}}\ket{x}_{\theta}\ket{0\ldots 0}$ for uniformly random $x,\theta \gets \bit^\secp$, no QPT adversary can output $x'\in \bit^\secp$ and a signature $\sigma$ such that  $\sigma$ is a valid signature for $x'$ 
and $x'_j\neq x_j$ for some $j$ such that $\theta_j=0$ with a non-negligible probability.  
\end{definition}  
It is easy to show that the Lamport signature satisfies the above property. This can be seen as follows. 
Recall that a verification key of the Lamport signature consists of $(v_{j,b})_{j\in [\secp],b\in \bit}$ where 
$v_{j,b}\seteq f(u_{j,b})$ for a one-way function $f$ and uniformly random inputs  $(u_{j,b})_{j\in [\secp],b\in \bit}$, and a signature for a message $m\in \bit^\secp$ is $\sigma\seteq (u_{j,m_j})_{j\in [\secp]}$.  
Suppose that there is an adversary that breaks the above property. Then, there must exist $j\in [\secp]$ such that $x'_j\neq x_j$ and $\theta_j=0$, in which case the input state $U_{\mathsf{sign}}\ket{x}_{\theta}\ket{0\ldots 0}$ does not have any information of $u_{j,1-x_j}$. On the other hand, to generate a valid signature for $x'$, the adversary has to find a preimage of  $v_{j,x'_j}=v_{j,1-x_j}$.  
This is impossible by the one-wayness of $f$. Thus, a digital signature scheme that satisfies one-time unforgeability for BB84 states exists assuming the existence of one-way functions.  

\paragraph{Achieving minimal assumptions.}
In the above, we explain that one-way functions are sufficient for instantiating the digital signature scheme needed for our compiler. On the other hand, encryption schemes with publicly verifiable deletion does not seem to imply the existence of one-way functions because ciphertexts can be quantum. Thus, one-way functions may not be a necessary assumption for them. To weaken the assumption, we observe that we can use hard quantum planted problems for $\NP$ instead of one-way functions in the Lamport signature. Here, hard quantum planted problems for a $\NP$ language $L$ is specified by a quantum polynomial-time sampler that samples an instance-witness pair $(x,w)$ for the language $L$ in such a way that no QPT adversary can find a witness for $x$  with non-negligible probability. Given such a sampler, it is easy to see that we can instantiate the digital signature scheme similar to the above except that $(v_{j,b},u_{j,b})$ is now replaced with an instance-witness pair sampled by the sampler. 

We observe that $Z$ with $\txPVD$ implies the existence of hard quantum planted problems for $\NP$ where $Z \in \setbk{\textrm{SKE, COM, PKE, ABE, QFHE, TRE, WE}}$.\footnote{We assume that the verification algorithm of $Z$ with $\txPVD$ is a classical deterministic algorithm. If we allow it to be a quantum algorithm, we have to consider hard quantum planted problems for $\mathsf{QCMA}$, which are also sufficient to instantiate our compiler.} This can be seen by considering the verification key as an instance and certificate as a witness for an $\NP$ language. Our construction relies on hard quantum planted problems for $\NP$  and plain $Z$ with quantum encryption and decryption (or committing) algorithms, both of which are implied by $Z$ with $\txPVD$, and thus it is based on the minimal assumptions.

\subsection{More on Related Works}\label{sec:related_works}

\paragraph{Certified deletion for ciphertext with private verifiability.}
Broadbent and Islam~\cite{TCC:BroIsl20} achieved one-time SKE with privately verifiable deletion without any cryptographic assumptions.
Hiroka et al.~\cite{AC:HMNY21} achieved PKE and ABE with non-everlasting privately verifiable deletion. They also achieve interactive PKE with non-everlasting privately verifiable deletion and classical communication in the quantum random oracle model (QROM) from the LWE assumption. However, none of their constructions satisfy certified everlasting security.
Hiroka, Morimae, Nishimaki, and Yamakawa~\cite{C:HMNY22} defined certified everlasting commitment and zero-knowledge (for $\class{QMA}$) by extending everlasting security~\cite{JC:MulUnr10}. They achieved these notions by using plain commitment and QROM.
Bartusek and Khurana~\cite{EPRINT:BarKhu22} defined certified everlasting security for various all-or-nothing type encryption primitives and presented a general framework to achieve certified everlasting secure all-or-nothing type encryption and commitment with privately verifiable deletion using BB84 states. They also studied secure computation with everlasting security transfer.
Hiroka et al.~\cite{EPRINT:HKMNPY23} also defined and achieved certified everlasting security for various cryptographic primitives. In particular, they defined and achieved certified everlasting (collusion-resistant) functional encryption, garbled circuits, and compute-and-compare obfuscation, which are outside of all-or-nothing type encryption. Their constructions are also based on BB84 states and are privately verifiable. They also use a signature-based authentication technique for BB84 states. However, its role is not achieving public verifiability but functional encryption security. See the paper by Hiroka et al.~\cite{EPRINT:HKMNPY23} for the differences between their results and Bartusek and Khurana's results~\cite{EPRINT:BarKhu22}.  



\paragraph{Certified deletion for keys or secure key leasing.}
Kitagawa and Nishimaki~\cite{AC:KitNis22} defined secret key functional encryption with secure key leasing, where we can generate a classical certificate for the deletion of functional keys. They achieved such a primitive with bounded collusion-resistance from one-way functions.
Agrawal, Kitagawa, Nishimaki, Yamada, and Yamakawa~\cite{myEPRINT:AKNYY23} defined PKE with secure key leasing, where we lose decryption capability after we return a quantum decryption key. They achieved it from standard PKE. They also extended the notion to ABE with secure key leasing and public key functional encryption with secure key leasing. They achieved them from standard ABE and public key functional encryption, respectively.
Garg et al.~\cite{EPRINT:BGGKMRR23} presented a public key functional encryption with secure key leasing scheme based on IO and injective one-way functions.
Ananth, Poremba, and Vaikuntanathan~\cite{EPRINT:AnaPorVai23} also defined the same notion as PKE with secure key leasing (they call key-revocable PKE). However, their definition differs slightly from that of Agrawal et al.~\cite{myEPRINT:AKNYY23}. They also studied key-revocable FHE and PRF. They achieved them from the LWE assumption.

\paragraph{Secure software leasing.}
Ananth and La Placa~\cite{EC:AnaLaP21} defined secure software leasing, where we lose software functionality after we return a quantum software. They achieved secure software leasing for a subclass of evasive functions from public key quantum money and the LWE assumptions. After that, several works proposed extensions, variants, and improved constructions of secure software leasing~\cite{EPRINT:ColMajPor20,TCC:BJLPS21,TCC:KitNisYam21,C:ALLZZ21}.


\section{Preliminaries}\label{sec:prel}
\subsection{Notations}\label{sec:notations}
Here we introduce basic notations we will use in this paper.

In this paper, standard math or sans serif font stands for classical algorithms (e.g., $C$ or $\algo{Gen}$) and classical variables (e.g., $x$ or $\keys{pk}$).
Calligraphic font stands for quantum algorithms (e.g., $\Gen$) and calligraphic font and/or the bracket notation for (mixed) quantum states (e.g., $\qstate{q}$ or $\ket{\psi}$).

Let $x\leftarrow X$ denote selecting an element $x$ from a finite set $X$ uniformly at random, and $y\leftarrow A(x)$ denote assigning to $y$ the output of a quantum or probabilistic or deterministic algorithm $A$ on an input $x$.
When $D$ is a distribution, $x\leftarrow D$ denotes sampling an element $x$ from $D$.
$y\seteq z$ denotes that $y$ is set, defined, or substituted by $z$.
Let $[n]\seteq \{1,\dots,n\}$. Let $\secp$ be a security parameter.
For a bit string $s\in\{0,1\}^n$, $s_i$ denotes the $i$-th bit of $s$.
QPT stands for quantum polynomial time.
PPT stands for (classical) probabilistic polynomial time. 
We say that a quantum (resp. probabilistic classical) algorithm is efficient if it runs in QPT (resp. PPT). \takashi{I added it.}
A function $f: \N \ra \R$ is a negligible function if for any constant $c$, there exists $\secp_0 \in \N$ such that for any $\secp>\secp_0$, $f(\secp) < \secp^{-c}$. We write $f(\secp) \leq \negl(\secp)$ to denote $f(\secp)$ being a negligible function. 
The trace distance between two quantum states $\rho$ and $\sigma$ denoted as $\TD(\rho,\sigma)$ is given by 
$\frac{1}{2}\norm{\rho-\sigma}_{\tr}$, where $\norm{A}_{\tr}\seteq \tr \sqrt{{\it A}^{\dagger}{\it A}}$ is the trace norm. 

\subsection{Cryptographic Tools}\label{sec:crypt_tool}
In this section, we review the cryptographic tools used in this paper.

\fuyuki{Do we need the definition of OWF?}
\ryo{I don't think so.}

\begin{definition}[Signature]\label{def:SIG}
 Let $\cM$ be a message space. A signature scheme for $\cM$ is a tuple of efficient \takashi{I added "efficient". I added it to all other definitions too.} algorithms $(\Gen,\Sign,\Vrfy)$ where:

\begin{description}
    \item[$\Gen(1^\secp) \ra (\vk,\sigk)$:] The key generation algorithm takes as input the security parameter $1^\secp$ and outputs  a verification key $\vk$ and a signing key $\sigk$. 

    \item[$\Sign(\sigk, \msg) \ra \sigma$:] The signing algorithm takes as input a signing key $\sigk$ and a message $\msg \in \cM$ and outputs a signature $\sigma$.

    \item[$\Vrfy(\vk, \msg,\sigma) \ra \top \mbox{ or } \bot$:] The verification algorithm is a deterministic algorithm that takes as input a verification key $\vk$, a message $\msg$ and a signature $\sigma$ and outputs $\top$ to indicate acceptance of the signature and $\bot$ otherwise.
\end{description}

\begin{description}
\item [Correctness:] For all $\msg \in \cM$, $(\vk,\sigk)$ in the range of $\Gen(1^\secp)$, and $\sigma\in \Sign(\sigk,\msg)$, we have $\Vrfy(\vk, \msg,\sigma) = \top$.
\end{description}	
\end{definition}

\begin{definition}[Deterministic Signature]\label{def:deterministic_SIG}
We say that a signature scheme $\SIG=(\Gen,\Sign,\Vrfy)$ is a deterministic signature if $\Sign(\sigk,\cdot)$ is a deterministic function.
\end{definition}

\begin{definition}[Public Key Encryption]\label{def:pke} \takashi{Added this definition.}
A PKE scheme is a tuple of efficient algorithms $\PKE=(\Gen,\Enc,\Dec)$.
\begin{description}
    \item[$\Gen(1^\secp) \ra (\pk,\sk)$:]
    The key generation algorithm takes the security parameter $1^\secp$ as input and outputs a public key $\pk$ and a secret key $\sk$.
    \item[$\Enc(\pk,m) \ra \ct$:]
    The encryption algorithm takes $\pk$ and a plaintext $m\in\bit$ as input,
    and outputs a ciphertext $\ct$.
    \item[$\Dec(\sk,\ct) \ra m^\prime$:]
    The decryption algorithm takes $\sk$ and $\ct$ as input, and outputs a plaintext $m^\prime \in \Ms$ or $\bot$.
    
\item[Correctness:]
For any $m\in\Ms$, we have
\begin{align}
\Pr\left[
m'\neq m
\ \middle |
\begin{array}{ll}
(\pk,\sk)\lrun \keygen(1^\secp)\\
\ct \lrun \Enc(\pk,m)\\
m'\la\Dec(\sk,\ct)
\end{array}
\right] 
\le\negl(\secp).
\end{align}

\item[Semantic Security:] For any QPT $\qA$, we have
\begin{align}
\abs{\Pr\left[\qA(\pk,\ct)=1
\ \middle |
\begin{array}{ll}
(\pk,\sk)\lrun \Gen(1^\secp)\\
\ct \lrun \Enc(\pk,0)
\end{array}
\right]
-\Pr\left[\qA(\pk,\ct)=1
\ \middle |
\begin{array}{ll}
(\pk,\sk)\lrun \keygen(1^\secp)\\
\ct \lrun \Enc(\pk,1)
\end{array}
\right]}=\negl(\secp).
\end{align}
\end{description}
\end{definition}

\newcommand{\fow}{f}
\newcommand{\msglen}{\ell}
\newcommand{\Rela}{\mathcal{R}}

\section{Signature with One-Time Unforgeability for BB84 states}\label{sec:sig_otu_bb84}
\paragraph{Definition.}
We first provide the definition of one-time unforgeability for BB84 states.

\begin{definition}[One-Time Unforgeability for BB84 states]
Let $\SIG=(\Gen,\Sign,\Vrfy)$ be a signature scheme.
We define the experiment $\expb{\SIG,\qA}{otu}{bb84}(1^\secp)$ between an adversary $\qA$ and challenger as follows.
\begin{enumerate}
           \item The challenger runs $(\vk,\sigk)\gets\Gen(1^\secp)$ and generates $x,\theta\la\bit^\secp$. The challenger generates a quantum state $\ket{\psi}$ by applying the map $\ket{m}\ket{0\ldots0}\ra\ket{m}\ket{\Sign(\sigk,m)}$ to $\ket{x}_\theta\tensor\ket{0\ldots0}$. The challenger gives $\vk$ and $\ket{\psi}$ to $\qA$.
            \item $\qA$ outputs a pair of message and signature $(x^\prime,\sigma^\prime)$ as the challenge message-signature pair to the challenger. 
            \item The experiment outputs $1$ if the followings are satisfied.
            \begin{itemize}
            \item $x_i \ne x_i'$ holds for some $i$ such that $\theta_i = 0$.
            \item $\Vrfy(\vk,x^\prime,\sigma^\prime) = 1$ holds.
            \end{itemize}
        \end{enumerate}
        We say $\SIG$ is one-time unforgeable for BB84 states if, for any QPT adversary $\qA$, it holds that
\begin{align}
\advb{\SIG,\qA}{otu}{bb84}(\secp) \seteq \Pr[\expb{\SIG,\qA}{otu}{bb84}(1^\secp)=1] = \negl(\secp).
\end{align}
\end{definition}

\paragraph{Construction.}
We construct a classical deterministic signature scheme $\SIG=(\Gen,\Sign,\Vrfy)$ satisfying one-time unforgeability for BB84 states using OWF $\fow$. 
The message space of $\SIG$ is $\bit^{\msglen}$.

\begin{description}
\item[$\Gen(1^\secp)$:] $ $
\begin{itemize}
\item Generate $u_{i,b}\la\bit^\secp$ for every $i\in[\msglen]$ and $b\in\bit$.
\item Compute $v_{i,b}\la\fow(u_{i,b})$ for every $i\in[\msglen]$ and $b\in\bit$.
\item Output $\vk:=(v_{i,b})_{i\in[\msglen],b\in\bit}$ and $\sigk:=(u_{i,b})_{i\in[\msglen],b\in\bit}$.
 \end{itemize}
\item[$\Sign(\sigk,\msg)$:] $ $
\begin{itemize}
\item Parse $(u_{i,b})_{i\in[\msglen],b\in\bit}\la\sigk$.
\item Output $(u_{i,\msg_i})_{i\in[\msglen]}$.
 \end{itemize}
\item[$\Vrfy(\vk,\msg,\sigma)$:] $ $
\begin{itemize}
\item Parse $(v_{i,b})_{i\in[\msglen],b\in\bit}\la\vk$ and $(w_i)_{i\in[\msglen]}\la\sigma$.
\item Output $\top$ if $u_{i,\msg_i}=\fow(w_i)$ holds for every $i\in[\msglen]$ and otherwise output $\bot$.
\end{itemize}
\end{description}

It is clear that $\SIG$ is a correct classical deterministic signature.
Also, we have the following theorem.

\begin{theorem}\label{thm:ots_bb84_from_owf}
Assume $\fow$ is OWF.
Then, $\SIG$ satisfies one-time unforgeability for BB84 states.
\end{theorem}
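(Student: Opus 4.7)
The plan is to reduce one-time unforgeability for BB84 states to the one-wayness of $f$ via a standard ``guess-and-embed'' argument, adapted to handle the BB84 structure on the message register. Suppose for contradiction that a QPT adversary $\qA$ wins $\expb{\SIG,\qA}{otu}{bb84}$ with non-negligible probability $\epsilon$. The key observation is that in any winning transcript there must exist an index $i^{*}\in[\msglen]$ with $\theta_{i^{*}}=0$ and $x'_{i^{*}}\ne x_{i^{*}}$, so $\sigma'_{i^{*}}$ is a preimage under $f$ of $v_{i^{*},1-x_{i^{*}}}$, a value whose preimage the adversary has never been given.

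From $\qA$ I would build a QPT inverter $\qB$ for $f$ as follows. On input an OWF challenge $v^{*}=f(u^{*})$, the reduction samples $i^{*}\gets[\msglen]$ and $b^{*}\gets\bit$ as its guess for the ``bad'' index and the value of $x'_{i^{*}}$; it also fixes $\theta_{i^{*}}\seteq 0$ and $x_{i^{*}}\seteq 1-b^{*}$. It then sets $v_{i^{*},b^{*}}\seteq v^{*}$, samples $u_{i^{*},1-b^{*}}\gets\bit^{\secp}$ honestly and computes $v_{i^{*},1-b^{*}}=f(u_{i^{*},1-b^{*}})$, and for every $i\ne i^{*}$ samples $u_{i,0},u_{i,1}\gets\bit^{\secp}$ and computes $v_{i,0},v_{i,1}$ honestly. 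It sends $\vk\seteq(v_{i,b})_{i,b}$ to $\qA$. To prepare the state $\ket{\psi}$, the reduction samples $\theta_i,x_i\gets\bit$ for every $i\ne i^{*}$ and, using the fact that the signing unitary factorises as $\bigotimes_{i}V_i$ with $V_i\colon\ket{b}\ket{0}\mapsto\ket{b}\ket{u_{i,b}}$, applies $V_i$ on $\ket{x_i}_{\theta_i}\ket{0}$ for each $i\ne i^{*}$, and at $i^{*}$ prepares the purely computational-basis state $\ket{1-b^{*}}\ket{u_{i^{*},1-b^{*}}}$ which it can compute because it knows $u_{i^{*},1-b^{*}}$. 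When $\qA$ returns $(x',\sigma')$, $\qB$ outputs the $i^{*}$-th block of $\sigma'$ as its candidate preimage of $v^{*}$.

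The analysis has two pieces. First, conditioned on $\theta_{i^{*}}=0$ and $x_{i^{*}}=1-b^{*}$, the joint distribution of $(\vk,\ket{\psi})$ produced by $\qB$ is identical to the one in the real experiment: the verification key is an honest tuple of images under $f$ (since $u^{*}$ is uniform), and the state at position $i^{*}$ equals $V_{i^{*}}\ket{1-b^{*}}\ket{0}$, which is exactly what the honest challenger would produce for these fixed values of $\theta_{i^{*}},x_{i^{*}}$. Second, sampling $\theta,x,i^{*},b^{*}$ jointly uniformly in the honest game, the event $\theta_{i^{*}}=0\wedge x_{i^{*}}=1-b^{*}$ occurs with probability $1/4$ and, conditioned on $\qA$ winning, a uniformly chosen $i^{*}$ lies in the non-empty set $W\seteq\{i:\theta_i=0,x'_i\ne x_i\}$ with probability at least $1/\msglen$, after which $b^{*}=x'_{i^{*}}$ holds with probability $1/2$. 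Whenever the guess is consistent and $\qA$ wins, the verification condition $v_{i^{*},x'_{i^{*}}}=f(\sigma'_{i^{*}})$ ensures that $\sigma'_{i^{*}}$ is a valid preimage of $v^{*}$. Therefore $\qB$ inverts $f$ with probability at least $\epsilon/(2\msglen)$, contradicting the one-wayness of $f$.

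The only delicate point is the coherent preparation of the signed BB84 state: although $\qB$ is missing one preimage $u^{*}$, it must still hand $\qA$ a state that is statistically indistinguishable from the honest one. This is exactly where the choice $\theta_{i^{*}}=0$ is essential, since it collapses the $i^{*}$-th position to a classical basis state whose signature block depends only on the known preimage $u_{i^{*},1-b^{*}}$. Once this observation is in place, the remainder of the argument is a routine probabilistic guessing analysis.
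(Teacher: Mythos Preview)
Your proposal is correct and follows essentially the same approach as the paper: the key observation in both is that for any computational-basis position $i$ (i.e., $\theta_i=0$), the signed BB84 state $\ket{\psi}$ depends only on $u_{i,x_i}$ and not on $u_{i,1-x_i}$, so a forgery at such a position yields a fresh preimage under $f$. The paper's proof is a two-sentence sketch of this observation, while you have spelled out the explicit guess-and-embed reduction that the sketch implicitly invokes; your tensor-factorisation of the signing unitary and your handling of the $i^{*}$-th register are exactly the details one would fill in. One minor presentational point: the $1/4$ you mention for the event $\theta_{i^{*}}=0\wedge x_{i^{*}}=1-b^{*}$ is already implied by $i^{*}\in W\wedge b^{*}=x'_{i^{*}}$, so it need not be accounted for separately (and indeed your final bound $\epsilon/(2\msglen)$ is correct and even slightly conservative, since conditioning on that event only helps).
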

\begin{proof}
For a computational basis position $i\in[\secp]$ (that is, $i$ such that $\theta_i=0$), when we generate $\ket{\psi}$ by applying the map $\ket{m}\ket{0\ldots0}\ra\ket{m}\ket{\Sign(\sigk,m)}$ to $\ket{x}_\theta\tensor\ket{0\ldots0}$, $\ket{\psi}$ contains only $u_{i,b}$ and not $u_{i,1-b}$ if $x_i=b$.
From this fact, the one-time unforgeability of $\SIG$ directly follows from the security of $\fow$.
\end{proof}


\section{Certified Everlasting Lemmas}\label{sec:certified_everlasting_lemmas}

In this section, we first review the certified everlasting lemma by Bartusek and Khurana~\cite{EPRINT:BarKhu22}, which provides cryptography with privately verifiable deletion. Next, we present our new certified everlasting lemma, which provides cryptography with $\txPVD$.

\begin{lemma}[Certified Everlasting Lemma~\cite{EPRINT:BarKhu22}]\label{lem:ce}
Let $\setbk{\cZ_\secp(\cdot,\cdot,\cdot)}_{\secp \in \N}$ be an efficient quantum operation with three arguments: a $\secp$-bit string $\theta$, a bit $\beta$, and a quantum register $\qreg{A}$. 
For any QPT $\qB$, consider the following experiment $\widetilde{\cZ}_\secp^{\qB}(b)$ over quantum states, obtained by running $\qB$ as follows.
\begin{itemize}
\item Sample $x,\theta \chosen \zo{\secp}$ and initialize $\qB$ with $\cZ_\secp(\theta,b \xor \bigoplus_{i: \theta_i=1} x_i,\ket{x}_\theta)$.
\item $\qB$'s output is parsed as a string $x^\prime\in\zo{\secp}$ and a residual state on register $\qreg{B}$.
\item If $x_i = x^\prime_i$ for all $i$ such that $\theta_i=0$, output $\qreg{B}$, and otherwise output a special symbol $\bot$. 

Assume that for any QPT $\qA$, $\theta\in\bit^\secp$, $\beta\in\bit$, and an efficiently samplable state $\ket{\psi}^{\qreg{A,C}}$ on registers $\qreg{A}$ and $\qreg{C}$, we have
\[
\abs{\Pr[\qA(\cZ_{\secp}(\theta,\beta,\qreg{A}),\qreg{C})=1] - \Pr[\qA(\cZ_{\secp}(0^\secp,\beta,\qreg{A}),\qreg{C})=1]} \le \negl(\secp).
\]
Then, for any QPT $\qB$, we have
\[
\TD(\widetilde{\cZ}_{\secp}^{\qB}(0),\widetilde{\cZ}_{\secp}^{\qB}(1))\le \negl(\secp).
\]
\end{itemize}
\end{lemma}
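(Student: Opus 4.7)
The plan is to follow the two-step strategy that underlies most BB84-based certified-deletion arguments: first, use the computational hiding hypothesis on $\cZ_\secp$ to switch to a hybrid in which $\cZ_\secp$ does not depend on $\theta$ in its first argument; second, in the hybrid, apply a purely information-theoretic certified-deletion bound for BB84 states.

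Concretely, I would define a hybrid experiment $\widetilde{\cZ}_\secp^{\qB,\ast}(b)$ identical to $\widetilde{\cZ}_\secp^{\qB}(b)$ except that every application of $\cZ_\secp(\theta,\beta,\qreg{A})$ is replaced by $\cZ_\secp(0^\secp,\beta,\qreg{A})$. A straightforward reduction shows these two experiments are indistinguishable under the hypothesis: given a distinguisher, I build a QPT adversary $\qalgo{A}$ that samples $x,\theta \leftarrow \bit^\secp$, prepares $\ket{x}_\theta$ on register $\qreg{A}$ and the classical data $(x,\theta,b)$ on the auxiliary register $\qreg{C}$, receives $\cZ_\secp(\star,b\oplus\bigoplus_{i:\theta_i=1}x_i,\qreg{A})$ with $\star\in\{\theta,0^\secp\}$, simulates $\qB$, performs the deletion check, and forwards the experiment output to the distinguisher. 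The hypothesis on $\cZ_\secp$ bounds its advantage by $\negl(\secp)$.

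Second, I would analyse the hybrid experiment statistically. Since $\cZ_\secp(0^\secp,\cdot,\cdot)$ does not depend on $\theta$, I can purify the challenger's sampling by replacing the classical draws of $(x,\theta)$ with measurements of EPR pairs, compute $\beta = b\oplus\bigoplus_{i:\theta_i=1}x_i$ into a coherent register, apply $\cZ_\secp(0^\secp,\beta,\cdot)$ to the $\qreg{A}$ side, and defer the measurement of $(x,\theta)$ until after $\qB$'s action. In this purified picture, $b$ enters only through $\beta$. The standard BB84 certified-deletion analysis (tracing back to Broadbent--Islam) then shows that, conditional on the check $x_i = x_i'$ for all $i$ with $\theta_i=0$ passing, the parity $\bigoplus_{i:\theta_i=1} x_i$ is statistically uniform from $\qB$'s viewpoint. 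Consequently $\beta$ is uniform regardless of $b$, and the hybrid outputs for $b=0$ and $b=1$ have negligible trace distance.

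The hard part will be cleanly combining the computational first step with the statistical second step: computational indistinguishability of quantum states does not in general imply a trace-distance bound, so a naive triangle inequality is insufficient. The way around, mirroring the Bartusek--Khurana bookkeeping, is to package the reduction so that any test applied to the output of $\widetilde{\cZ}_\secp^{\qB}(b)$ can be absorbed into $\qalgo{A}$'s internal simulation, letting the BB84 trace-distance bound proved in the hybrid propagate back through the computational step. Once that is in place, the claimed trace-distance bound on $\widetilde{\cZ}_\secp^{\qB}(0)$ versus $\widetilde{\cZ}_\secp^{\qB}(1)$ follows.
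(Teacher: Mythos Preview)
The paper does not prove this lemma at all: \cref{lem:ce} is stated as a citation of the certified everlasting lemma from Bartusek--Khurana~\cite{EPRINT:BarKhu22}, followed only by \cref{rem:difference_from_BK} explaining cosmetic differences from the original formulation. There is therefore no ``paper's own proof'' to compare your proposal against.

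That said, your outline is a faithful high-level account of the Bartusek--Khurana argument: first use the hiding hypothesis to replace the $\theta$-argument of $\cZ_\secp$ by $0^\secp$, then run a purified BB84 certified-deletion analysis in the resulting hybrid. Two minor points worth tightening if you actually flesh this out. First, the hypothesis is stated for \emph{fixed} $\theta,\beta$, so your reduction must invoke it via an averaging argument over the random $\theta$ (and the induced $\beta$); this is routine but should be said. Second, your worry in the last paragraph is exactly the nontrivial part of the BK proof: trace distance in the hybrid does not by itself imply trace distance in the original experiment, because the first hop is only computational. The BK resolution is precisely what you sketch---any (even unbounded) distinguisher on the experiment's output can be folded into a QPT test by composing it with the QPT $\qB$ and the QPT check, so the computational hop applies to the composed object---and making that precise is where the work lies.
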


\begin{remark}\label{rem:difference_from_BK}
Besides notational differences, the above lemma has the following differences from the original lemma by \cite{EPRINT:BarKhu22}.
\begin{itemize}
\item We focus on QPT adversaries $\qA$ and $\qB$, though the original lemma by \cite{EPRINT:BarKhu22} captures more general classes of adversaries.
\item The roles of computational basis position and Hadamard basis position in $\widetilde{\cZ}_\secp^{\qA_\secp^\prime}(b)$ are switched.
\end{itemize}
\end{remark}

We can upgrade~\cref{lem:ce} to a publicly verifiable one by using signatures with one-time unforgeability for BB84 state introduced in~\cref{sec:sig_otu_bb84}.

\begin{lemma}[Publicly Verifiable Certified Everlasting Lemma]\label{lem:pv_ce}
Let $\SIG=(\Gen,\Sign,\Vrfy)$ be a signature scheme satisfying one-time unforgeability for BB84 states.
Let $\setbk{\cZ_\secp(\cdot,\cdot,\cdot,\cdot,\cdot)}_{\secp \in \N}$ be an efficient quantum operation with five arguments: a verification key $\vk$ and a signing key $\sigk$ of $\SIG$, a $\secp$-bit string $\theta$, a bit $\beta$, and a quantum register $\qreg{A}$. 
For any QPT $\qB$, consider the following experiment $\widetilde{\cZ}_\secp^{\qB}(b)$ over quantum states, obtained by running $\qB$ as follows.
\begin{itemize}
\item Sample $x, \theta \leftarrow \{0, 1\}^{\secp}$ and generate $(\vk,\sigk)\la\Gen(1^\secp)$. Generate a quantum state $\ket{\psi}$ by applying the map $\ket{m}\ket{0\ldots0}\ra\ket{m}\ket{\Sign(\sigk,m)}$ to $\ket{x}_\theta\tensor\ket{0\ldots0}$. Initialize $\qB$ with $\cZ_\secp(\vk,\sigk,\theta,b \xor \bigoplus_{i: \theta_i=1} x_i,\ket{\psi})$.
\item $\qB$'s output is parsed as a pair of strings $(x^\prime,\sigma^\prime)$ and a residual state on register $\qreg{B}$.
\item If $\Vrfy(\vk,x^\prime,\sigma^\prime)=\top$, output $\qreg{B}$, and otherwise output a special symbol $\bot$. 

Assume that for any QPT $\qA$, key pair $(\vk,\sigk)$ of $\SIG$, $\theta\in\bit^\secp$, $\beta\in\bit$, and an efficiently samplable state $\ket{\psi}^{\qreg{A,C}}$ on registers $\qreg{A}$ and $\qreg{C}$, we have
\begin{align}
\abs{\Pr[\qA(\cZ_{\secp}(\vk,\sigk,\theta,\beta,\qreg{A}),\qreg{C})=1] - \Pr[\qA(\cZ_{\secp}(\vk,0^{\sigklen},\theta,\beta,\qreg{A}),\qreg{C})=1]} &\le \negl(\secp), \textrm{~~~~and}\label{eqn_pvce_sk}\\
\abs{\Pr[\qA(\cZ_{\secp}(\vk,\sigk,\theta,\beta,\qreg{A}),\qreg{C})=1] - \Pr[\qA(\cZ_{\secp}(\vk,\sigk,0^\secp,\beta,\qreg{A}),\qreg{C})=1]} &\le \negl(\secp).\label{eqn_pvce_theta}
\end{align}
Then, for any QPT $\qB$, we have
\[
\TD(\widetilde{\cZ}_{\secp}^{\qB}(0),\widetilde{\cZ}_{\secp}^{\qB}(1))\le \negl(\secp).
\]
\end{itemize}
\end{lemma}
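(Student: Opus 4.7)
The plan is to reduce the publicly verifiable statement to the original certified everlasting lemma (\Cref{lem:ce}). First, I define an operation $\cZ'_\secp(\theta,\beta,\qreg{A})$ that internally samples $(\vk,\sigk)\gets\Gen(1^\secp)$, applies the signing unitary $\ket{m}\ket{0\cdots 0}\mapsto\ket{m}\ket{\Sign(\sigk,m)}$ to $\qreg{A}\otimes\ket{0\cdots 0}$ to produce a register $\qreg{A}'$, and outputs $\vk$ together with $\cZ_\secp(\vk,\sigk,\theta,\beta,\qreg{A}')$. The indistinguishability hypothesis of \Cref{lem:ce} for $\cZ'$ (swapping $\theta$ for $0^\secp$) follows from~\eqref{eqn_pvce_theta} by averaging over $(\vk,\sigk)\gets\Gen$ and noting that the post-signing register is efficiently samplable. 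I then define $\qB'$ as follows: on input $(\vk,\zeta)$, it runs $\qB$ on $(\vk,\zeta)$ to obtain $(x',\sigma',\rho)$ and outputs $x'$ together with residual state $\rho$ if $\Vrfy(\vk,x',\sigma')=\top$, and the special symbol $\bot$ otherwise (realized coherently through the $\Vrfy$ flag). \Cref{lem:ce} then yields $\TD\bigl(\widetilde{\cZ'}^{\qB'}_\secp(0),\widetilde{\cZ'}^{\qB'}_\secp(1)\bigr)\le\negl(\secp)$.

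The remaining step is to bound $\TD\bigl(\widetilde{\cZ}^{\qB}_\secp(b),\widetilde{\cZ'}^{\qB'}_\secp(b)\bigr)$ for each $b\in\bit$. By construction, both experiments perform identical computation and differ only in the output rule: the publicly verifiable one outputs $\rho$ whenever $\Vrfy(\vk,x',\sigma')=\top$, while the $\cZ'$-experiment additionally requires $x'_i=x_i$ for all $i$ with $\theta_i=0$. Hence the trace distance is bounded by the probability of the event $\mathsf{Bad}$ that $\Vrfy$ accepts while $x'_i\ne x_i$ for some $i$ with $\theta_i=0$, which is exactly the one-time unforgeability-for-BB84 forgery event. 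To bound $\Pr[\mathsf{Bad}]$, I use two hybrid hops followed by a reduction: first, applying~\eqref{eqn_pvce_sk} to the distinguisher that runs $\qB$ and checks $\mathsf{Bad}$ using $(\vk,x,\theta)$ carried in register $\qreg{C}$, I replace $\sigk$ with $0^{\sigklen}$ in $\qB$'s input; next, applying~\eqref{eqn_pvce_theta} in the same way, I replace $\theta$ with $0^\secp$. In the resulting hybrid, the OTU-BB84 reduction receives $(\vk,\ket{\psi})$ from its challenger, samples $\beta'\gets\bit$ uniformly, runs $\qB$ on $(\vk,\cZ_\secp(\vk,0^{\sigklen},0^\secp,\beta',\ket{\psi}))$, and forwards $(x',\sigma')$ to the challenger. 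For each fixed context $(x,\theta,\vk,\ket{\psi})$, writing $q_{\beta^*}$ for the conditional probability of $\mathsf{Bad}$ under input-$\beta^*$, one has $q_{b\oplus\bigoplus_{j:\theta_j=1}x_j}\le q_0+q_1$, so averaging gives $\Pr[\mathsf{Bad}\text{ in hybrid}]\le 2\Pr[\text{reduction wins}]$, which is negligible by \Cref{thm:ots_bb84_from_owf}.

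Combining the three negligible bounds via the triangle inequality gives $\TD\bigl(\widetilde{\cZ}^{\qB}_\secp(0),\widetilde{\cZ}^{\qB}_\secp(1)\bigr)\le\negl(\secp)$, as desired. The most delicate point is the handling of $\beta=b\oplus\bigoplus_{j:\theta_j=1}x_j$ in the OTU-BB84 reduction: since $\beta$ is correlated with the challenger's hidden $(x,\theta)$, direct simulation is impossible, which is precisely why the two hybrids (eliminating $\sigk$ and $\theta$ from the inside of $\cZ_\secp$) are needed before a uniform guess of $\beta'$ becomes admissible, and why the pointwise union-type bound $q_{b\oplus\bigoplus_j x_j}\le q_0+q_1$ is crucial for controlling the factor-of-two loss. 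A minor bookkeeping point is that \eqref{eqn_pvce_sk} and \eqref{eqn_pvce_theta} are stated per-$(\vk,\sigk,\theta,\beta)$ while our hybrids involve random values; this is handled by averaging over the experiment's randomness in the standard way.
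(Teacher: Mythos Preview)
Your argument is correct and follows the same two–step skeleton as the paper: bound the forgery event (your $\mathsf{Bad}$, the paper's $\ttForge$) and then invoke \Cref{lem:ce} on the privately verified variant (your $\widetilde{\cZ'}^{\qB'}_\secp$ plays the role of the paper's $\widehat{\cZ}^{\qB}_\secp$, up to how the $\Vrfy$-flag is packaged into the residual register).

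The one substantive difference is in how the forgery bound is established. The paper performs a single hybrid hop via~\eqref{eqn_pvce_sk} to zero out $\sigk$ and then asserts that one-time unforgeability for BB84 states yields $\Pr[\ttForge^*]=\negl(\secp)$ directly, without spelling out how the OTU-BB84 reduction simulates the inputs $\theta$ and $\beta=b\oplus\bigoplus_{j:\theta_j=1}x_j$ that it does not learn from its challenger. You correctly flag this as ``the most delicate point'' and handle it explicitly: a second hop via~\eqref{eqn_pvce_theta} removes $\theta$ from the $\cZ$-input, after which a uniform guess of $\beta'$ suffices with a factor-two loss via $q_{\beta^*}\le q_0+q_1$. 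So your write-up actually fills in a detail the paper leaves implicit. Two minor quibbles: (i) the hypothesis~\eqref{eqn_pvce_theta} is stated for valid key pairs, so strictly speaking your second hop should be done before replacing $\sigk$ by $0^{\sigklen}$ (or chained through the real $\sigk$); (ii) in the final reduction you should invoke the assumed OTU-BB84 security of $\SIG$, not \Cref{thm:ots_bb84_from_owf}, since the lemma is stated for an arbitrary such signature.
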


\begin{proof}
We first define the following event $\ttForge$.
\begin{description}
\item[$\ttForge$:] For $(x^\prime,\sigma^\prime)$ output by $\qB$ in $\widetilde{\cZ}^{\qB}_\secp(b)$, the followings are satisfied.
\begin{itemize}
\item $x_i \ne x_i^\prime$ holds for some $i$ such that $\theta_i = 0$.
\item $\Vrfy(\vk,x^\prime,\sigma^\prime) = \top$ holds.
\end{itemize}
\end{description}
We also define the event $\ttForge^*$ in the same way as $\ttForge$ except that $\qB$ is initialized with $\cZ_\secp(\vk,0^{\sigklen},\theta,b \oplus \bigoplus_{i: \theta_i = 1} x_i,\ket{\psi})$.
From \cref{eqn_pvce_sk}, we have $\Pr[\ttForge]=\Pr[\ttForge^*]+\negl(\secp)$.
Also, from the one-time unforgeability for BB84 states, we have $\Pr[\ttForge^*]=\negl(\secp)$.
Thus, we obtain $\Pr[\ttForge]=\negl(\secp)$.

We define $\widehat{\cZ}_\secp^{\qB}(b)$ as the experiment defined in the same way as $\widetilde{\cZ}_\secp^{\qB}(b)$ except that the experiment outputs the register $\qreg{B}$ if and only if $x_i=x^\prime_i$ holds for all $i$ such that $\theta_i=0$.
Since $\Pr[\ttForge]=\negl(\secp)$, if $\Vrfy(\vk,x^\prime,\sigma^\prime) = \top$ holds, then $x_i = x_i^\prime$ holds for all $i$ such that $\theta_i = 0$, except negligible probability.
Thus, we have
\begin{align}
\TD(\widetilde{\cZ}_\secp^{\qB}(0),\widetilde{\cZ}_\secp^{\qB}(1)) \le \TD(\widehat{\cZ}_\secp^{\qB}(0),\widehat{\cZ}_\secp^{\qB}(1))+\negl(\secp).
\end{align}
From \cref{eqn_pvce_theta} and \cref{lem:ce}, we have $\TD(\widetilde{\cZ}_\secp^{\qB}(0),\widetilde{\cZ}_\secp^{\qB}(1)) =\negl(\secp)$.
This completes the proof.
\end{proof}


\section{Instantiations}\label{instantiations}

We can apply~\cref{lem:pv_ce} to various cryptographic primitives.
\subsection{Public-Key Encryption}
First, we show how to apply~\cref{lem:pv_ce} to PKE and obtain PKE with $\txPVD$.
\paragraph{Definition.}
We define PKE with publicly verifiable deletion (PKE-PVD).

\begin{definition}[PKE with Publicly Verifiable Deletion]\label{def:pke_pvd}
A PKE scheme with publicly verifiable deletion is a tuple of efficient algorithms $\PKEPVD=(\Gen,\qEnc,\qDec,\qDelete,\Vrfy)$.
\begin{description}
    \item[$\Gen(1^\secp) \ra (\pk,\sk)$:]
    The key generation algorithm takes the security parameter $1^\secp$ as input and outputs a public key $\pk$ and a secret key $\sk$.
    \item[$\qEnc(\pk,m) \ra (\vk,\qct)$:]
    The encryption algorithm takes $\pk$ and a plaintext $m\in\bit$ as input,
    and outputs a verification key $\vk$ and a ciphertext $\qct$.
    \item[$\qDec(\sk,\qct) \ra m^\prime$:]
    The decryption algorithm takes $\sk$ and $\qct$ as input, and outputs a plaintext $m^\prime \in \Ms$ or $\bot$.
    \item[$\qDelete(\qct) \ra \cert$:]
    The deletion algorithm takes $\qct$ as input and outputs a certification $\cert$.
    \item[$\Vrfy(\vk,\cert)\ra \top$ or $\bot$:]
    The verification algorithm is a deterministic algorithm that takes $\vk$ and $\cert$ as input, and outputs $\top$ or $\bot$.

\item[Decryption Correctness:]
For any $m\in\Ms$, we have
\begin{align}
\Pr\left[
m'\neq m
\ \middle |
\begin{array}{ll}
(\pk,\sk)\lrun \keygen(1^\secp)\\
(\vk,\qct) \lrun \qEnc(\pk,m)\\
m'\la\qDec(\sk,\qct)
\end{array}
\right] 
\le\negl(\secp).
\end{align}

\item[Verification Correctness:]
For any $m\in\Ms$, we have
\begin{align}
\Pr\left[
\Vrfy(\vk,\cert)=\bot
\ \middle |
\begin{array}{ll}
(\pk,\sk)\lrun \keygen(1^\secp)\\
(\vk,\qct) \lrun \qEnc(\pk,m)\\
\cert \lrun \qDelete(\qct)
\end{array}
\right] 
\leq
\negl(\secp).
\end{align}

\item[Semantic Security:] For any QPT $\qA$, we have
\begin{align}
\abs{\Pr\left[\qA(\pk,\vk,\qct)=1
\ \middle |
\begin{array}{ll}
(\pk,\sk)\lrun \Gen(1^\secp)\\
(\vk,\qct) \lrun \qEnc(\pk,0)
\end{array}
\right]
-\Pr\left[\qA(\pk,\vk,\qct_1)=1
\ \middle |
\begin{array}{ll}
(\pk,\sk)\lrun \keygen(1^\secp)\\
(\vk,\qct) \lrun \qEnc(\pk,1)
\end{array}
\right]}=\negl(\secp).
\end{align}
\end{description}
\end{definition}

%
%

\begin{definition}[Certified Deletion Security for PKE-PVD]\label{def:cd_security_pkepvd}
Let $\PKEPVD=(\Gen,\qEnc,\qDec,\allowbreak \qDelete,\Vrfy)$ be a PKE-PVD scheme.
We consider the following security experiment $\expb{\PKEPVD,\qA}{cert}{del}(\secp,b)$ against a QPT adversary $\qA$.
\begin{enumerate}
    \item The challenger computes $(\pk,\sk) \la \Gen(1^\secp)$, and sends $\pk$ to $\qA$.
    \item The challenger computes $(\vk,\qct) \la \qEnc(\pk,b)$, and sends $\vk$ and $\qct$ to $\qA$.
    \item At some point, $\qA$ sends $\cert$ and its internal state $\rho$ to the challenger.
    \item The challenger computes $\Vrfy(\vk,\cert)$.
    If the outcome is $\top$, the challenger outputs $\rho$ and otherwise outputs $\bot$.
\end{enumerate}
We say that $\PKEPVD$ satisfies certified deletion security if for any QPT $\qA$, it holds that
\begin{align}
\TD(\expb{\PKEPVD,\qA}{cert}{del}(\secp,0),\expb{\PKEPVD,\qA}{cert}{del}(\secp,1)) \le \negl(\secp).
\end{align}
\end{definition}

\begin{remark}
We define PKE-PVD and the certified deletion security for it as the plaintext space of PKE-PVD is $\bit$ by default.
We can generalize them into one for the plaintext space $\bit^\ell$ for any polynomial $\ell$.
Also, such a multi-bit plaintext PKE-PVD can be constructed from a single-bit plaintext PKE-PVD by the standard hybrid argument.
\end{remark}

\paragraph{Construction.}
We construct a PKE-PVD scheme $\PKEPVD=(\Gen,\qEnc,\qDec,\qDelete,\Vrfy)$ for the plaintext space $\bit$.
The building blocks are as follows.
\begin{itemize}
\item A public-key encryption scheme $\PKE=\PKE.(\Gen,\Enc,\Dec)$.
\item A deterministic signature $\SIG=\SIG.(\Gen,\Sign,\Vrfy)$.
\end{itemize}

\begin{description}
\item[$\Gen(1^\secp)$:] $ $
\begin{itemize}
\item Output $(\pk,\sk)\la\PKE.\Gen(1^\secp)$.
 \end{itemize}
\item[$\qEnc(\pk,\msg\in\bit)$:] $ $
\begin{itemize}
\item Generate $x, \theta \leftarrow \{0, 1\}^{\secp}$ and generate $(\vk,\sigk)\la\SIG.\Gen(1^\secp)$. Generate a quantum state $\ket{\psi}$ by applying the map $\ket{m}\ket{0\ldots0}\ra\ket{m}\ket{\SIG.\Sign(\sigk,m)}$ to $\ket{x}_\theta\tensor\ket{0\ldots0}$.
\item Generate $\pke.\ct\la\PKE.\Enc(\pk,(\sigk,\theta,\msg\oplus\bigoplus_{i:\theta_i=1}x_i))$.
\item Output $\vk$ and $\qct:=(\ket{\psi},\pke.\ct)$.
 \end{itemize}
 \item[$\qDec(\sk,\qct)$:] $ $
 \begin{itemize}
 \item Parse $\qct$ into a quantum states $\rho$  and classical bit string $\pke.\ct$.
 \item Compute $(\sigk,\theta,\beta)\la\PKE.\Dec(\sk,\pke.\ct)$.
 \item Apply the map $\ket{m}\ket{0\ldots0}\ra\ket{m}\ket{\SIG.\Sign(\sigk,m)}$ to $\rho$, measure the first $\secp$ qubits of the resulting state in Hadamard basis, and obtain $\bar{x}$.
 \item Output $m\la\beta\oplus\bigoplus_{i:\theta_i=1}\bar{x}_i$. 
 \end{itemize}
 \item[$\qDelete(\qct)$:] $ $
 \begin{itemize}
  \item Parse $\qct$ into a quantum state $\rho$ and a classical string $\pke.\ct$.
 \item Measure $\rho$ in the computational basis and obtain $x^\prime$ and $\sigma^\prime$.
 \item Output $(x^\prime,\sigma^\prime)$. 
 \end{itemize}
\item[$\Vrfy(\vk,\cert)$:] $ $
\begin{itemize}
\item Parse $(z^\prime,\sigma^\prime)\la\cert$.
\item Output the result of $\SIG.\Vrfy(\vk,z^\prime,\sigma^\prime)$.
\end{itemize}
\end{description}

We see that $\PKEPVD$ satisfies decryption correctness and verification correctness if $\SIG$ and $\PKE$ satisfy their correctness notions.
Also, the semantic security of $\PKEPVD$ immediately follows from that of $\PKE$.
Also, we have the following theorem.

\begin{theorem}\label{thm:PKE_PVD}
Assume $\SIG$ satisfies one-time unforgeability for BB84 states and $\PKE$ satisfies semantic security.
Then, $\PKEPVD$ satisfies certified deletion security.
\end{theorem}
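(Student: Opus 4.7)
My plan is to reduce certified deletion security of $\PKEPVD$ directly to the Publicly Verifiable Certified Everlasting Lemma (\cref{lem:pv_ce}) using the semantic security of $\PKE$ to verify the hypotheses of that lemma. The key observation is that, up to re-parsing, an execution of $\expb{\PKEPVD,\qA}{cert}{del}(\secp,b)$ can be written as an execution of $\widetilde{\cZ}_\secp^{\qB}(b)$ for a suitably chosen quantum operation $\cZ_\secp$ and a QPT $\qB$ that simulates the challenger against $\qA$.

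Concretely, I would first define the efficient quantum operation $\cZ_\secp(\vk,\sigk,\theta,\beta,\qreg{A})$ that internally samples $(\pk,\sk)\gets \PKE.\Gen(1^\secp)$, computes $\pke.\ct \gets \PKE.\Enc(\pk,(\sigk,\theta,\beta))$, and outputs the tuple $(\pk,\vk,\qreg{A},\pke.\ct)$. Then I would define $\qB$ to receive this tuple, assemble the PKE-PVD ciphertext $\qct \seteq (\qreg{A},\pke.\ct)$, run $\qA(\pk,\vk,\qct)$, and finally return $\qA$'s certificate $(x',\sigma')$ together with its residual state $\rho$ on register $\qreg{B}$. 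Comparing the two experiments step-by-step, the distribution of $\widetilde{\cZ}_\secp^{\qB}(b)$ is then identical to that of $\expb{\PKEPVD,\qA}{cert}{del}(\secp,b)$: in both, the verification succeeds exactly when $\SIG.\Vrfy(\vk,x',\sigma')=\top$, in which case $\rho$ is output.

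Next, I would check the two indistinguishability hypotheses \eqref{eqn_pvce_sk} and \eqref{eqn_pvce_theta} required by \cref{lem:pv_ce}. Both follow from the semantic security of $\PKE$ (applied in its standard multi-bit form via a hybrid argument). For \eqref{eqn_pvce_sk}, the only difference between $\cZ_\secp(\vk,\sigk,\theta,\beta,\qreg{A})$ and $\cZ_\secp(\vk,0^{\sigklen},\theta,\beta,\qreg{A})$ is whether the $\PKE$ plaintext is $(\sigk,\theta,\beta)$ or $(0^{\sigklen},\theta,\beta)$; everything else, including $\vk$ and the auxiliary register $\qreg{C}$, is generated independently of $\pk,\sk$. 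A straightforward reduction samples $(\vk,\sigk,\theta,\beta)$ and the state $\ket{\psi}^{\qreg{A,C}}$ on its own (efficient samplability is explicitly assumed in the lemma), queries the $\PKE$ challenger on the two candidate plaintexts, and forwards the result together with $\qreg{A,C}$ to the distinguisher. The argument for \eqref{eqn_pvce_theta} is identical with $\theta$ in place of $\sigk$.

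Having verified both hypotheses, \cref{lem:pv_ce} yields $\TD(\widetilde{\cZ}_\secp^{\qB}(0),\widetilde{\cZ}_\secp^{\qB}(1))\le \negl(\secp)$, and by the equivalence above this gives the desired trace-distance bound between $\expb{\PKEPVD,\qA}{cert}{del}(\secp,0)$ and $\expb{\PKEPVD,\qA}{cert}{del}(\secp,1)$. The only place that requires care is the indistinguishability step: one must be sure that $\qreg{C}$ may be entangled with $\qreg{A}$ and could in principle depend on $\sigk$ and $\theta$, but since the PKE reduction samples the entire state $\ket{\psi}^{\qreg{A,C}}$ itself, this poses no difficulty. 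No further use of signature unforgeability is needed here, because unforgeability was already consumed inside the proof of \cref{lem:pv_ce}.
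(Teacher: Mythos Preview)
Your proposal is correct and follows essentially the same route as the paper: define $\cZ_\secp(\vk,\sigk,\theta,\beta,\qreg{A})$ to sample the PKE key pair and output $(\pk,\vk,\qreg{A},\PKE.\Enc(\pk,(\sigk,\theta,\beta)))$, verify the two hypotheses of \cref{lem:pv_ce} via the semantic security of $\PKE$, and conclude. You are in fact a bit more explicit than the paper in including $\qreg{A}$ in the output of $\cZ_\secp$ and in spelling out the wrapper $\qB$ and the PKE reduction, but the argument is the same.
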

\begin{proof}
We define $\cZ_\secp(\vk,\sigk,\theta,\beta,\qreg{A})$ be an efficient quantum process such that it generates $(\pk,\sk)\la\Gen(1^\secp)$ and outputs $(\pk,\vk,\Enc(\pk,(\sigk,\theta,\beta))$.
Then, from the semantic security of $\PKE$, for any QPT $\qA$, key pair $(\vk,\sk)$ of $\SIG$, $\theta\in\bit^\secp$, $\beta\in\bit$, and an efficiently samplable state $\ket{\psi}^{\qreg{A,C}}$ on registers $\qreg{A}$ and $\qreg{C}$, we have
\begin{align}
\abs{\Pr[\qA(\cZ_{\secp}(\vk,\sigk,\theta,\beta,\qreg{A}),\qreg{C})=1] - \Pr[\qA(\cZ_{\secp}(\vk,0^{\sigklen},\theta,\beta,\qreg{A}),\qreg{C})=1]} &\le \negl(\secp), \textrm{~~~~and}\\\
\abs{\Pr[\qA(\cZ_{\secp}(\vk,\sigk,\theta,\beta,\qreg{A}),\qreg{C})=1] - \Pr[\qA(\cZ_{\secp}(\vk,\sigk,0^\secp,\beta,\qreg{A}),\qreg{C})=1]} &\le \negl(\secp).
\end{align}
Then, the certified deletion security of $\PKEPVD$ follows from \cref{lem:pv_ce}.
\end{proof}

In \cref{sec:sig_otu_bb84}, we show that a deterministic signature scheme with one-time unforgeability for BB84 states is implied by OWF that is implied by a PKE scheme. 
Thus, we obtain the following theorem.
\begin{theorem}\label{thm:pke_pvd_from_pke}
Assume that there exists PKE.
Then, there exists PKE-PVD.
\end{theorem}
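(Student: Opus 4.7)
The plan is to observe that \cref{thm:pke_pvd_from_pke} follows by chaining together results already established in the excerpt plus one folklore ingredient. Specifically, to invoke \cref{thm:PKE_PVD}, I need two things: a semantically secure \PKE scheme, which is exactly the hypothesis, and a deterministic signature scheme $\SIG$ satisfying one-time unforgeability for BB84 states. So the whole task reduces to producing the latter from \PKE.

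The key intermediate step is the standard (post-quantum) implication that \PKE implies one-way functions. The plan is to recall that if $\PKE=(\Gen,\Enc,\Dec)$ is semantically secure against QPT adversaries, then the function $f(r_1,r_2) \seteq (\pk,\Enc(\pk,0;r_2))$ with $(\pk,\sk)\la \Gen(1^\secp;r_1)$ is one-way: inverting $f$ on a uniformly random input would in particular recover the randomness used to encrypt $0$, which can be used to distinguish encryptions of $0$ from encryptions of $1$ (by re-encrypting $0$ with the recovered randomness and checking equality with the challenge ciphertext). This is the classical Impagliazzo--Luby style argument and it goes through verbatim in the post-quantum setting since the reduction is black-box and QPT-preserving. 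I would simply cite this as folklore rather than re-derive it.

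Once OWFs are in hand, \cref{thm:ots_bb84_from_owf} directly yields a deterministic signature scheme $\SIG$ satisfying one-time unforgeability for BB84 states. Combining this $\SIG$ with the given $\PKE$ in the construction of \Cref{sec:instantiations} and applying \cref{thm:PKE_PVD} yields a $\PKEPVD$ scheme satisfying certified deletion security; decryption correctness, verification correctness, and semantic security have already been observed to follow directly from the corresponding properties of $\PKE$ and $\SIG$ in the paragraph preceding \cref{thm:PKE_PVD}.

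The only possible obstacle is a purely expository one: being careful that the $\PKE \Rightarrow \mathrm{OWF}$ implication is post-quantum. Since semantic security in \cref{def:pke} is already quantified over QPT adversaries $\qA$, this causes no trouble, and the theorem follows immediately. In fact, as remarked in \Cref{sec:tech_overview}, one could strengthen the statement by replacing OWF with hard quantum planted problems for $\NP$, since both the Lamport-style construction of \Cref{sec:sig_otu_bb84} and the folklore reduction above extend to that weaker assumption; the same proof skeleton then gives $\PKEPVD$ from any \PKE with possibly quantum encryption/decryption, which is the minimal assumption advertised in \Cref{sec:results}.
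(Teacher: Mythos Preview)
Your proposal is correct and follows essentially the same route as the paper: the paper's entire proof is the one sentence preceding the theorem, namely that $\PKE$ implies OWF (folklore), OWF gives the required deterministic signature via \cref{thm:ots_bb84_from_owf}, and then \cref{thm:PKE_PVD} applies. Your write-up simply spells out the folklore $\PKE\Rightarrow\mathrm{OWF}$ step in more detail than the paper does.
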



\subsection{Other Primitives}
By combining \cref{lem:pv_ce} with other types of primitives instead of PKE, we immediately obtain them with publicly verifiable deletion.
That is, we instantiate $\cZ_{\secp}$ in~\cref{lem:pv_ce} with these primitives.  
Formally, we have the following theorem.
\begin{theorem}\label{thm:advanced_encryption_pvd}
If there exists 
\[
Z' \in \setbk{\textrm{SKE, COM, PKE, ABE, QFHE, TRE}}, 
\]
then, there exists $Z'$ with publicly verifiable deletion.
If there exist WE and one-way functions, then there exists WE with publicly verifiable deletion. 
\end{theorem}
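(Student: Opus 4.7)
The plan is to follow exactly the same template as the PKE-PVD construction (Theorem~\ref{thm:PKE_PVD}), instantiating the ``outer'' encryption/commitment of that compiler with the primitive $Z'$ in place of $\PKE$. For each $Z' \in \{\textrm{SKE, COM, PKE, TRE}\}$, the construction is essentially identical: the encryptor (or committer) samples $x,\theta \la \bit^\secp$, generates $(\vk,\sigk) \la \SIG.\Gen(1^\secp)$, prepares $\ket{\psi}$ by coherently signing $\ket{x}_\theta$ as in Section~\ref{sec:sig_otu_bb84}, and then uses $Z'$'s own encryption (or committing) algorithm to encode the classical payload $(\sigk, \theta, m \oplus \bigoplus_{i:\theta_i=1} x_i)$. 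The ciphertext is $(\ket{\psi}, Z'.\ct)$ together with $\vk$; deletion and verification are done exactly as in $\PKEPVD$. For WE, the instance $x$ plays the role of a public key and the payload is encrypted under $x$; since WE does not seem to imply OWF, we additionally assume OWF in order to instantiate the deterministic signature scheme with one-time unforgeability for BB84 states provided by Theorem~\ref{thm:ots_bb84_from_owf}.

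For each such instantiation, the reduction to Lemma~\ref{lem:pv_ce} is direct: define $\cZ_\secp(\vk,\sigk,\theta,\beta,\qreg{A})$ to run $Z'$'s encryption/commitment on $(\sigk,\theta,\beta)$ and output the result together with $\vk$ (and any necessary public parameters of $Z'$). The two computational-indistinguishability hypotheses~\eqref{eqn_pvce_sk} and~\eqref{eqn_pvce_theta} of Lemma~\ref{lem:pv_ce} then reduce immediately to the standard semantic/hiding security of $Z'$: hiding $\sigk$ and $\theta$ inside an encryption (or commitment) is precisely what $Z'$'s security guarantees. Applying Lemma~\ref{lem:pv_ce} gives certified everlasting security, while decryption/opening correctness is inherited from $Z'$ and $\SIG$, and verification correctness follows from measuring $U_{\mathsf{sign}}\ket{x}_\theta$ in the computational basis.

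For QFHE, the construction is the same but we use a \emph{quantum} FHE scheme. The reason, as noted in Section~\ref{sec:results}, is that the ciphertext now contains a quantum register $\ket{\psi}$, and homomorphic evaluation must act on this quantum ciphertext; semantic security of QFHE suffices for the reduction in exactly the same way. For ABE, the simplified version of the lemma stated in the excerpt is insufficient because an ABE adversary additionally queries secret keys for attributes that do not satisfy the challenge policy; the general form of the certified everlasting lemma (the full Lemma~\ref{lem:ce} from \cite{EPRINT:BarKhu22}, with an analogous generalisation of Lemma~\ref{lem:pv_ce}) allows $\cZ_\secp$ to additionally take an adversarially chosen auxiliary input representing these key queries, and the reduction then goes through using the ABE security game in the standard way.

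The main obstacle I would expect is the ABE case: one must carefully check that the generalisation of Lemma~\ref{lem:pv_ce} we stated covers the setting where the adversary adaptively (or selectively) queries secret keys and the challenge payload is hidden under a policy not satisfied by any queried key. This requires verifying that the two indistinguishability conditions~\eqref{eqn_pvce_sk} and~\eqref{eqn_pvce_theta} hold even when $\cZ_\secp$ also exposes an oracle for key queries; once this is in place, the reduction is routine. All remaining cases (SKE, COM, PKE, TRE, QFHE, WE) are straightforward specialisations of the PKE-PVD argument.
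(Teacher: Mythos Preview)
Your proposal is correct and follows essentially the same approach as the paper: the paper's own treatment of this theorem is even terser, simply stating that one instantiates $\cZ_\secp$ in \cref{lem:pv_ce} with each primitive in place of $\PKE$ and referring to \cite{EPRINT:BarKhu22,myEPRINT:BarKhuPor23} for the omitted definitions. Your additional remarks on the QFHE and ABE subtleties (quantum evaluation on the quantum ciphertext, and the need for the general form of the lemma accommodating key queries) are accurate and in fact spell out more detail than the paper itself provides.
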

\begin{remark}
    We additionally assume one-way functions for the case of WE since WE is unlikely to imply one-way functions while any of $Z' \in \setbk{\textrm{SKE, COM, PKE, ABE, QFHE, TRE}}$ implies them.
\end{remark}
We omit the definitions of these primitives (with publicly verifiable deletion) and refer the reader to \cite{EPRINT:BarKhu22,myEPRINT:BarKhuPor23} for them.\footnote{The definitions in \cite{EPRINT:BarKhu22} only consider privately verifiable deletion, but it is straightforward to extend them to ones with publicly verifiable deletion.} 




\section{Making Assumptions Minimal}
In \Cref{instantiations}, we show that  $Z \in \setbk{\textrm{SKE, COM, PKE, ABE, QFHE, TRE, WE}}$ (and one-way functions in the case of $Z=\text{WE}$) imply $Z$ with publicly verifiable deletion. However, $Z$ with publicly verifiable deletion does not seem to imply either of plain (classical) $Z$ or one-way functions in general. 
In this section, we explain how to weaken the assumptions to minimal ones implied by $Z$ with publicly verifiable deletion.

First, we observe that our compiler works even if we start with $Z \in \setbk{\textrm{SKE, COM, PKE, ABE, QFHE, TRE, WE}}$ that has quantum encryption and decryption (or committing) algorithms.  
Second, we observe our compiler works  even if the underlying digital signature scheme has a quantum key generation algorithm since the quantum encryption algorithm runs it. Moreover, such a digital signature scheme with a quantum key generation algorithm that satisfies one-time security for BB84 states can be constructed from hard quantum planted problems for $\NP$ defined below. 

\begin{definition}[Hard Quantum Planted Problem for $\NP$]
A quantum polynomial-time algorithm $\qalgo{G}$ is a sampler for an $\NP$ relation $\Rela \subseteq \zo{\ast}\times\zo{\ast}$ if, for every $n$, $\qalgo{G}(1^n)$ outputs a pair $(x,w)$ such that $(x,w)\in \Rela$ with probability $1$.
We say that the quantum planted problem corresponding to $(\qalgo{G},\Rela)$ is hard if, for every QPT $\qA$, it holds that
\[\Pr[(x, \qA(x)) \in \Rela \mid (x,w)\gets \qalgo{G}(1^n)] \le \negl(n).\]
\end{definition}
It is clear that the construction in \cref{sec:sig_otu_bb84} works using hard quantum planted problems for $\NP$ instead of one-way functions where input-output pairs of the one-way function are replaced with witness-instance pairs of the $\NP$ problem if we allow the key generation algorithm of the signature scheme to be quantum. 
Combining the above observations, we obtain the following theorem.
\begin{theorem}
Assume that there exists
\[
Z \in \setbk{\textrm{SKE, COM, PKE, ABE, QFHE, TRE, WE}}
\]
with quantum encryption and decryption (or committing) algorithms 
and hard quantum planted problems for $\NP$. 
Then, there exists $Z$ with publicly verifiable deletion. 
\end{theorem}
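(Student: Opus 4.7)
The plan is to replay the compiler from \Cref{instantiations} while tracking two relaxations: (i) the underlying primitive $Z$ may have quantum encryption/decryption (or committing) algorithms, and (ii) the digital signature built in \Cref{sec:sig_otu_bb84} is replaced by one whose key generation is quantum, built from a hard quantum planted problem $(\qalgo{G},\Rela)$ for $\NP$.

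First, I would redo the construction of \Cref{sec:sig_otu_bb84} with $\qalgo{G}$ in place of the OWF $\fow$. The new (quantum) key generation samples $(w_{i,b},x_{i,b})\gets \qalgo{G}(1^\secp)$ for $i\in[\msglen]$, $b\in\bit$, sets $\vk\seteq (x_{i,b})_{i,b}$ and $\sigk\seteq (w_{i,b})_{i,b}$; signing is the same deterministic map $m\mapsto (w_{i,m_i})_{i}$, and verification checks that $(x_{i,m_i},w_i)\in \Rela$ for all $i$. Correctness is immediate from the $\NP$ relation check. The proof of one-time unforgeability for BB84 states is essentially identical to that of \Cref{thm:ots_bb84_from_owf}: for every computational-basis coordinate $i$ (i.e.\ $\theta_i=0$), the state $\ket{\psi}$ supplied to the adversary depends only on $w_{i,x_i}$ and is independent of $w_{i,1-x_i}$; producing a valid signature for $x'$ with $x'_i\neq x_i$ thus yields, by a straightforward reduction, a witness for the instance $x_{i,1-x_i}$ sampled by $\qalgo{G}$, contradicting the hardness of the planted problem.

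Second, I would plug this signature scheme and the given $Z$ into the compiler of \Cref{instantiations}. For the PKE case the construction is literally that of \Cref{thm:PKE_PVD} except that $\PKE.\Gen$, $\PKE.\Enc$, $\PKE.\Dec$ and $\SIG.\Gen$ are now quantum algorithms. The two key observations are: the encryptor can run $\SIG.\Gen$ coherently (or just as a quantum subroutine) before preparing $\ket{\psi}$; and the quantum operation $\cZ_\secp(\vk,\sigk,\theta,\beta,\qreg{A})$ that samples $(\pk,\sk)\gets \PKE.\Gen$ and outputs $(\pk,\vk,\qalgo{Enc}(\pk,(\sigk,\theta,\beta)))$ is still efficient. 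The semantic security of the (now quantum) $\PKE$ gives exactly the two computational indistinguishabilities \eqref{eqn_pvce_sk} and \eqref{eqn_pvce_theta} required to invoke \Cref{lem:pv_ce}, and the latter lemma is stated and proved without any assumption on whether $\cZ_\secp$ uses classical or quantum internals. Hence \Cref{lem:pv_ce} applies verbatim, yielding certified deletion security. For $Z\in\{\text{SKE},\text{COM},\text{ABE},\text{QFHE},\text{TRE},\text{WE}\}$ the same recipe works: one instantiates $\cZ_\secp$ with the appropriate (possibly quantum) encryption/commitment procedure of $Z$ and reads off the indistinguishabilities from the semantic-security/hiding property of $Z$; this is exactly the content of \Cref{thm:advanced_encryption_pvd}, and nothing in that argument used classicality of the base scheme. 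For WE, which does not evidently imply hard quantum planted problems, the required signature scheme is supplied directly by the assumed hard quantum planted problem.

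The only step that takes any thought is checking that the reduction to \Cref{lem:pv_ce} still works when $\SIG.\Gen$ is quantum. The potential issue is that $\sigk$ is a quantum object; but our construction of $\SIG$ above has a classical signing key, so $\sigk$ can be fed into $\cZ_\secp$ as a classical string and the map $\ket{m}\ket{0}\mapsto \ket{m}\ket{\SIG.\Sign(\sigk,m)}$ remains a well-defined efficient unitary, which is exactly what the hypothesis of \Cref{lem:pv_ce} requires. This is the main (mild) obstacle; once it is noted, the theorem follows by combining the quantum-key-gen signature from the first paragraph with the compilers of \Cref{instantiations}.
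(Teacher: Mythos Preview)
Your proposal is correct and follows essentially the same approach as the paper: replace the OWF in the Lamport-style signature of \Cref{sec:sig_otu_bb84} by instance--witness pairs sampled via $\qalgo{G}$, observe that the resulting signing key is still a classical string so the unitary $\ket{m}\ket{0}\mapsto\ket{m}\ket{\Sign(\sigk,m)}$ and the operation $\cZ_\secp$ remain well-defined, and then rerun the compiler of \Cref{instantiations} with possibly quantum $Z$-algorithms. In fact, you have supplied more detail than the paper does (its argument is a two-line observation), including the explicit check that $\sigk$ being classical is what makes \Cref{lem:pv_ce} go through unchanged.
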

The assumption in the above theorem is minimal since 
\begin{enumerate}
    \item $Z$ with quantum encryption and decryption (or committing) algorithms is immediately implied by $Z$ with publicly verifiable deletion by simply ignoring the deletion and verification algorithms, and
    \item Hard quantum planted problems for $\NP$ are implied by $Z$ with publicly verifiable deletion by regarding the verification key and certificate as instance and certificate, respectively. 
\end{enumerate}
\begin{remark}
In the second item above, we assume that the verification algorithm of  $Z$ with publicly verifiable deletion is a classical deterministic algorithm. If it is allowed to be a quantum algorithm, we need to consider hard quantum planted problems for $\mathsf{QCMA}$ instead of $\NP$.  The construction in \cref{sec:sig_otu_bb84} works with such problems if we allow the verification algorithm to be quantum. Thus, the minimality of the assumption holds in this setting as well. 
\end{remark}


\ifnum\anonymous=1
\else


\fi

	\ifnum\llncs=1
\bibliographystyle{bib/extreme_alpha} 
\bibliography{bib/abbrev3,bib/crypto,bib/siamcomp_jacm,bib/other}
	\else
\bibliographystyle{alpha} 
\bibliography{bib/abbrev3,bib/crypto,bib/siamcomp_jacm,bib/other}
	\fi

\ifnum\cameraready=0
	\ifnum\llncs=0
	\appendix

\else
	\newpage
	 	\appendix
 	{
	\noindent
 	\begin{center}
	{\Large SUPPLEMENTAL MATERIALS}
	\end{center}
 	}
	\setcounter{tocdepth}{2}
	 	\ifnum\noaux=1
 	\else
{\color{red}{We attached the full version of this paper as a separated file (auxiliary supplemental material) for readability. It is available from the program committee members.}}
\fi

	\setcounter{tocdepth}{1}
	\tableofcontents

	\fi
	\else
\fi

\end{document}